\journal{---}
\newtheorem{theorem}{Theorem}[section]
\newenvironment{definition}[1][Definition]{\begin{trivlist}
\item[\hskip \labelsep {\bfseries #1}]}{\end{trivlist}}
\newenvironment{proof}[1][Proof]{\begin{trivlist}
\item[\hskip \labelsep {\bfseries #1}]}{\end{trivlist}}
\begin{document}

\begin{frontmatter}

\title{Mathematical transcription of the ``Time-Based Resource Sharing'' theory of working memory}

\author{Nicolas Gauvrit}
\address{Human and Artificial Cognition Lab, \'Ecole Pratique des Hautes \'Etudes, Paris}
\fntext[myfootnote]{\texttt{ngauvrit@me.com}}

\author{Fabien Mathy}
\address{Universit\'e Nice Sophia-Antipolis}




\begin{abstract}
The time-based resource sharing (TBRS) model is a prominent model of working memory that is both predictive and simple. The TBRS is the mainstream decay-based model and the most susceptible to competition with interference-based models. A connectionist implementation of the TBRS, the TBRS*, has recently been developed. However, the TBRS* is an enriched version of the TBRS, making it difficult to test the general characteristics resulting from the TBRS assumptions. Here, we describe a novel model, the TBRS2, built to be more transparent and simple than the TBRS*. The TBRS2 is minimalist and allows only a few parameters. It is a straightforward mathematical transcription of the TBRS that focuses exclusively on the activation level of  memory items as a function of time. Its simplicity makes it possible to derive several theorems  from the original TBRS and allows several variants of the refreshing process to be tested without relying on particular architectures.
\end{abstract}

\begin{keyword}
working memory \sep TBRS \sep simple span
\MSC[2010] 91E40
\end{keyword}

\end{frontmatter}


Working memory is often described as a unique cognitive resource serving both short-term maintenance and processing \cite{baddeley2003working}. It is thus central in problem solving \cite{swanson2004relationship}.
Working memory is known to be mediated by the prefrontal cortex \cite{braver2002role,kane2002role,prabhakaran2000integration}, and it is clearly linked to intelligence \cite{conway2002latent,engle1999working},  particularly when complex span tasks are used \cite{unsworth2009complex}.

The \emph{complex span task} paradigm is probably the most widespread dual procedure for measuring working memory \cite{conway2005working}. In complex span tasks, participants are presented with items to be memorized and instructed to recall them at the end of the trial in the correct order. In contrast to simple span tasks, the  presentation of the memory items is interspersed with a concurrent task that presents distractor items. Participants thus alternate between encoding memoranda and processing distractors.

A variety of concurrent tasks have been used, including reading sentences \cite{van2008standard}, reading digits \cite{barrouillet2004time}, verifying arithmetic statements \cite{redick2015measuring,turner1989working},  uttering predetermined syllables \cite{lewandowsky2010turning}, and diverse elementary tasks involving executive functions such as memory retrieval, response selection, or updating \cite{barrouillet2011law}. A critical feature for determining the validity of complex span tasks is the degree of control the experimenter has over the time devoted to processing the distractors vs. encoding the memory items \cite{oberauer2011tbrs}. In this respect, a computerized version of a complex span task \cite{barrouillet2007time} offers fine control over a participant's processing timeline.

In a computerized version of the complex span task, participants are, for instance, required to memorize a series of  items appearing on a screen sequentially. Each item is displayed once for a fixed amount of time (often fixed  to 1 s). Between these memory items, participants have to perform a second task. For instance, for the second task a participant could see a ``2'' for a short period of time, then ``+3,'' then ``$-$1'' and be required to update the digit 2 into 5 and then 4. At each step, the participant is instructed to give out loud the current result of the series of arithmetic operations. At the end of the trial, the participant must recall the to-be-memorized items in order at their own pace. A  trial for a list of two letters would look like: L, 2, +3, $-$1, H, 3, +1, $-$2, Recall. A correct response would be ``LH'' in that case.

Barouillet, Bernardin, and Camos \cite{barrouillet2004time} developed the
\emph{time-based resource sharing} (TBRS) model to specifically account for the performance of participants  in such experiments. The key ideas of the TBRS model can be summarized as follows: (1) Unless attention is focused on the memorandum to refresh the memory items, their memory traces fade away. (2) Because of a bottleneck effect, attention can be devoted only to either refreshing the memorandum item by item or  processing the distractors. Thus, participants perform rapid switches between refreshing items and performing the concurrent task. (3) The probability that an item is correctly recalled is a function of the item \emph{cognitive load}, defined as the proportion of time that cannot be devoted to refreshment of the item.

The TBRS has received much qualitative empirical support \cite{barrouillet2012time,barrouillet2009working,portrat2008time,vergauwe2009visual}. However, although it is both simple and rooted in a set of clearly stated hypotheses, the model remains underspecified. First, it does not indicate how long the refreshment period (between two switches) lasts for an item and whether this duration is fixed or determined by specific factors. It also does not indicate what happens when refreshment has been interrupted by a distractor, that is, whether people start refreshing the first item anew or continue from the last refreshed item, and so forth. Second, although the TBRS is described in detail in several publications \cite{barrouillet2004time, barrouillet2007time}, it has remained a theory based mostly on a verbal description (until recently---see below). This is unfortunate because it reduces opportunities to test the model and to use statistical criteria of fit \cite{pitt2002good}. Moreover, because building a mathematical description of the TBRS is easy and straightforward, verbal descriptions should be avoided by all means \cite{farrell2010computational,pothos2011formal,norris2005computational}.

Oberauer and Lewandowsky \cite{oberauer2011tbrs} have developed the only available computational implementation of the TBRS, the \emph{TBRS*}, a two-layer connectionist network. One objective of the authors was to bridge existing computational models of working memory with prominent features of the TBRS model. Although the model was found to fit experimental data, the authors remain skeptical about the TBRS in their conclusion.

The TBRS* is an important step toward a precise quantitative validation of the TBRS, but it may be a model that is too enriched compared to the original TBRS. Some caveats should therefore be kept in mind. First, the TBRS* merges features from the TBRS and features from other models. Hence, if it fails at predicting empirical data, it would be unclear whether this failure should be taken as evidence against the TBRS or against some other features related to the specific implementation. 
Second, because the TBRS* is an enriched model, some decisions were made (e.g., a serial position coding) that would be unnecessary for defining a more basic implementation of the TBRS.
Last, connectionist networks can be seen as ``black boxes'' and may lack transparency \cite{shahin2009recent}, making it difficult to formally demonstrate results that nonetheless follow from the TBRS framework.

Here, we present a new mathematical implementation of the TBRS (henceforth TBRS2) designed to remain as close as possible to the TBRS assumptions. On the one hand, our implementation is not as rich as the TBRS* and cannot account for as many features, as we do not address any question not already addressed in the TBRS verbal description. For instance, we do not aim to code the order of items in any way but stick to predicting the correct recall of each item. On the other hand, the TBRS2 bears two interesting features: (1) We have to make a decision only about the decay function of memory traces and the schedule of refreshing, that is, 2 decisions, whereas the TBRS* has to make 11 such decisions. In the same vein, our model needs 4 parameters, whereas the TBRS* needs no less than 10. (2) The TBRS2 dynamics is more transparent than that of the TBRS* because it relies on a mere analytical translation of the TBRS assumptions instead of a connectionist model. As a result, we can mathematically \emph{prove} some consequences of the TBRS assumptions, such as a functional entanglement between the decay and refreshing functions.

In the first three sections, we will describe TBRS2 and prove some theorems directly following from the TBRS assumptions. Six variants of the TBRS2 will then be described that vary depending on how the refreshment process occurs. In the last section, we will use empirical data to illustrate how the TBRS2 can be used to test the TBRS quantitatively and precisely.

\section{Overview}

Figure \ref{fig:flowchart} provides a general overview of the TBRS2. First, a complex span task (as described above) can be modeled by a ``task function,'' that is, a function of time indicating whether a memorandum is presented, a processing task is performed, or neither of these events occur, at time $t$. When an item is presented, attention focuses on the item (according to the TBRS). When a processing task is performed, the attentional focus is driven away from memoranda. The remaining ``spare'' time is dedicated entirely to refreshing items. How refreshment of the items is spread along the timeline depends on a refreshment strategy that the TBRS left unspecified. Once a decision is made about the refreshment schedule, we can derive a ``focus function'' from the task function. The focus function is a function of time indicating attentional focus (toward processing or encoding/refreshing an item). The graphical example given in Figure \ref{fig:flowchart} is based on the assumption that each item is refreshed for a fixed duration, starting anew from the first item after each interruption.

\begin{figure}[htbp]
\begin{center}
\includegraphics[width=\textwidth]{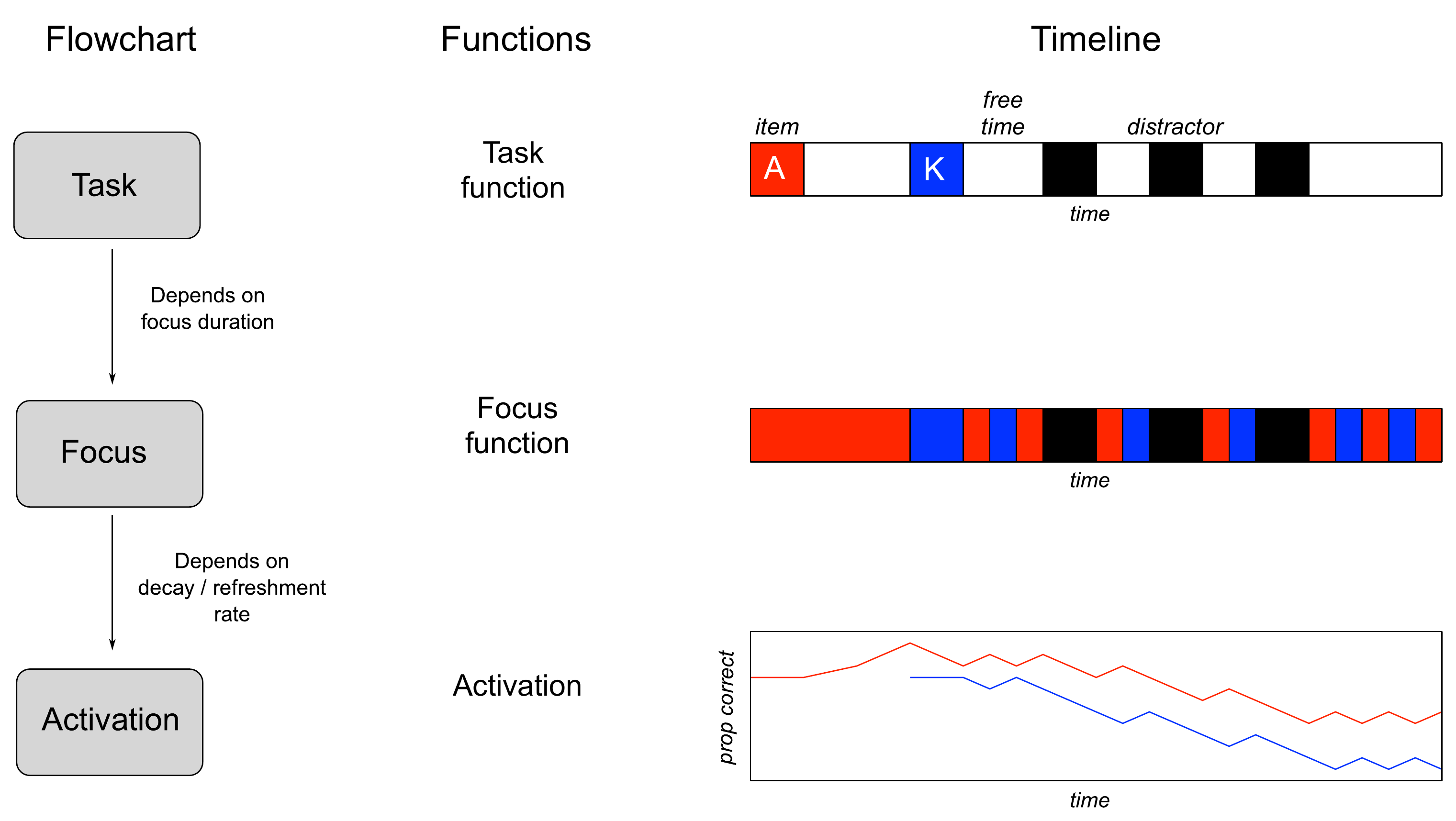}
\caption{Overview of TBRS2's main functions. Complex span tasks are modeled by a task function indicating what is happening along the timeline [presentation of a new memorandum (in color), free time (white), or processing tasks (black)]. From the task function, we can derive a focus function indicating how attentional focus switches from items (color) to processing distractors (black). To translate task into focus, we must specify how spare time is used, e.g., how long each refreshment period lasts. Activation is the odds of correct recall for a given item and can be derived from the focus function as soon as the decay and refreshment rates are set.}
\label{fig:flowchart}
\end{center}
\end{figure}

Once the focus function is set, we can derive the dynamics of activation of each item. Here, \emph{activation} stands for the odds of correct recall of an item at a given time, that is, the odds that the item would be recalled if the participants had to recall it at time $t$. The focus function translates into the activation dynamics through the decay and refreshment functions: when attentional focus spots item $i$, the activation of $i$ increases by the effect of refreshment. Otherwise, the activation decreases by the effect of time. As we will demonstrate below, the TBRS assumptions impose a direct link between the decay and refreshment functions. In the following, we will give more details about the model, starting from the link between the focus function and activation, and then turning to the task functions and their relations with the focus functions.

\section{From focus function to activation}

Let us consider a situation in which a sequence of items $(x_{1},\ldots ,x_{k})$ is to be memorized within a period of time $[0,T]$. At any instant $t\in [0,T]$ after its presentation, item $x_{i}$ has an \emph{activation} $a_{i}(t)$, here defined as the odds that $x_{i}$ would be correctly recalled at time $t$ by the participant, who would be required to recall items at this point in time.

\subsection{The dynamics of activation}

We later assume that the decay of memory traces (i.e., activation) is exponential, following a previous study \cite{wickelgren1970time}. However, we first consider a more general case of decay in order to prove a general link between decay and refreshment rates that follows from the TBRS hypotheses.
When attention is focused toward item $x_{i}$, the corresponding activation increases in such a way that $a_{i}$ is a solution of an autonomous differential equation $y'=R(y).$ The refreshment function $R$ is  continuous positive and does not depend on $i$. The previous equation only formalizes the idea that the refreshment rate does not depend on time per se, but depends on the current activation of the item. Likewise, when attention is driven away from $x_{i}$, $a_{i}$ decreases following an autonomous differential equation $y'=-D(y)$, where $D$ is a continuous positive function that is independent of $i$.

\subsection{Focus function and cognitive load}

Let us first define a function describing the dynamics of attentional focus with respect to an item $x_{i}$.

\begin{definition}
The \emph{focus function} $\varphi_{i}$ of item $x_{i}$ is defined as $\varphi_{i}(t)=1$ if the item is being refreshed at time $t$ (i.e., attention is focused on $x_{i}$), $\varphi_{i}(t)=2$ if item $x_{i}$ is displayed at time $t$, and $\varphi_{i}(t)=0$ otherwise\footnote{Note that $\varphi_{i}(t)=2$ is only a dummy code ascribed to a particular event.}.
\end{definition}

\begin{definition}
The \emph{item cognitive load} associated with $x_{i}$ on a time interval $[t_{0},t_{1}]$ is defined as the proportion of time not devoted to the item, that is, 
$$CL_{i}(t_{0},t_{1})=\frac{\mu \{[t_{0},t_{1}]\cap \varphi^{-1}(0) \}}{t_{1}-t_{0}}.$$
\paragraph{Note} When computing this cognitive load, we will always assume that item $x_{i}$ was presented before $t_{0}$ (and thus does not appear during $[t_{0},t_{1}]$).
\end{definition}

The TBRS model's core assumption is that $a_{i}(t_{1})$ depends only on $CL_{i}(t_{0},t_{1})$ and on its initial value $a_{i}(t_{0})$. From this \emph{cognitive load assumption}, a relation between $D$ and $R$ can be derived:

\begin{theorem}
\label{theo:cl}
Under the cognitive load assumption, $R$ and $D$ are proportional: $R=\kappa D,\; \kappa\in \mathbb{R}_{+}^{*}$.
\end{theorem}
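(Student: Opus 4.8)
The plan is to show that the cognitive load assumption forces the two one-parameter flows generated by $y'=R(y)$ and $y'=-D(y)$ to commute, and then to translate this commutation into the pointwise identity $D/R=\textrm{const}$. Throughout I write $\Phi_R^{s}$ for the time-$s$ flow of $y'=R(y)$ and $\Phi_D^{t}$ for the time-$t$ flow of $y'=-D(y)$.

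First I would isolate the real content of the cognitive load assumption. Fix the interval $[t_0,t_1]$ and the initial activation $a_i(t_0)=a$. On this interval the focus function takes only the values $0$ and $1$ (the item has already been presented), so the activation is obtained by alternately running $\Phi_R^{\cdot}$ on the set $\{\varphi_i=1\}$ and $\Phi_D^{\cdot}$ on $\{\varphi_i=0\}$. Two schedules that allocate the same total refreshment time $s$ and the same total decay time $t$ have the same length $L=s+t$ and the same cognitive load $CL=t/L$, so the assumption forces them to produce the same $a_i(t_1)$. Applying this to the two simplest schedules, ``refresh for $s$, then decay for $t$'' versus ``decay for $t$, then refresh for $s$,'' yields the commutation relation $\Phi_R^{s}\circ\Phi_D^{t}=\Phi_D^{t}\circ\Phi_R^{s}$ for all admissible $s,t\ge 0$ and all admissible initial values $a$.

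Next I would convert commutation into a statement about $R$ and $D$. Because the equations are scalar and autonomous with $R,D$ continuous and positive, separation of variables gives the explicit flows $\Phi_R^{s}(a)=P^{-1}(P(a)+s)$ and $\Phi_D^{t}(a)=Q^{-1}(Q(a)-t)$, where $P(y)=\int_{y_0}^{y}du/R(u)$ and $Q(y)=\int_{y_0}^{y}du/D(u)$ are strictly increasing $C^1$ functions. Substituting these into the commutation relation and setting $g=P\circ Q^{-1}$, the identity collapses to the statement that the increment $g(q+\delta)-g(q)$ is independent of the base point $q$. This is a Cauchy-type functional equation whose only continuous, strictly increasing solutions are affine, $g(x)=\alpha x$ with $\alpha>0$ (the additive constant vanishes since $P(y_0)=Q(y_0)=0$ force $g(0)=0$). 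Hence $P=\alpha Q$, and differentiating gives $1/R=\alpha/D$, i.e. $R=\kappa D$ with $\kappa=1/\alpha>0$.

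I expect the main obstacle to be the low regularity: since $R$ and $D$ are only assumed continuous, I cannot simply differentiate the commutation relation in $t$ at $t=0$ (the quick route, which would give $D(\Phi_R^{s}(a))=\partial_a\Phi_R^{s}(a)\,D(a)$ and hence $D/R=\textrm{const}$ at once). The separation-of-variables representation is what lets me sidestep this, trading differentiation of $R,D$ for the elementary $C^1$ regularity of the integrals of $1/R$ and $1/D$ together with a Cauchy equation. A secondary point requiring care is the bookkeeping about domains: I must check that the admissible $s,t$ and initial activations $a$ range over a full subinterval, so that ``$D/R$ constant on a set'' upgrades to ``$D/R$ constant'' on the whole activation range, and that uniqueness of solutions (needed for the flows to be well defined) holds; both follow from the continuity and positivity of $R,D$ on the relevant interval.
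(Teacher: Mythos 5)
Your proof is correct, but it takes a genuinely different route from the paper's. Both arguments rest on the same consequence of the cognitive load assumption --- that the final activation is invariant under rearranging refreshment and decay within a window of fixed length and fixed load --- but you exploit it globally where the paper exploits it infinitesimally. The paper slides a short refreshment pulse $[\tau,\tau+h]$ along the timeline, introduces the compensating decay time $\epsilon$ with $a(\tau+h+\epsilon)=a(\tau)$, argues that $h+\epsilon$ cannot depend on the current activation level $y$, and lets $h\to 0$ so that $\delta/h\to R(y)$ and $\delta/\epsilon\to D(y)$, whence $\epsilon/h\to R(y)/D(y)$, which must then be the constant $\kappa=\lim(\epsilon/h)$. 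That is short and vivid, but it asserts rather than derives the existence and $y$-independence of $\epsilon$, and it interchanges limits and treats quantities as uniformly well behaved although $R$ and $D$ are only assumed continuous. Your version instead encodes the rearrangement invariance once and for all as commutation of the two flows, $\Phi_R^{s}\circ\Phi_D^{t}=\Phi_D^{t}\circ\Phi_R^{s}$, passes to the separation-of-variables coordinates $P,Q$ (which are $C^{1}$ precisely because $1/R$ and $1/D$ are continuous, with positivity also guaranteeing uniqueness of solutions and hence well-defined flows), and reduces the identity to a Cauchy functional equation for $g=P\circ Q^{-1}$, whose continuous strictly increasing solutions are linear; $P=\alpha Q$ then differentiates to $R=\kappa D$ with $\kappa=1/\alpha$, the same constant the paper obtains as $\lim(\epsilon/h)$. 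What your approach buys is full rigor under the stated regularity (no differentiation of $R$ or $D$, no uniformity assumptions), at the cost of more machinery; what the paper's buys is brevity and a direct pointwise interpretation of $\kappa$. The one point to keep explicit in a final write-up is the domain bookkeeping you flag at the end: the admissible $(s,t,a)$ must sweep out a full interval of activation values so that the locally obtained constant $\alpha$, and hence $\kappa$, is valid on the whole connected activation range --- a point the paper's proof also glosses over.
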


\begin{proof}
Consider a situation in which a unique item ($x$) is to be remembered at time $T$ and such that attention is driven away from $x$ except on an interval $[\tau,\tau+h]$ (think of $h$ as ``small'').

The activation $a(t)$ is thus decreasing on $[0,\tau]$ and $[\tau+h,T]$, but increasing on $[\tau, \tau+h]$. The cognitive load at  $T$  does not depend on $\tau \in [0,T-h]$, so $a(T)$ does not depend on $\tau$ either.

There exists a single $\epsilon>0$ such that $a(\tau+h+\epsilon)=a(\tau)$, where $\epsilon$ is the time needed for $a$ to go back down to the level it was at $\tau$, before refreshment. The cognitive load assumption implies that $h+\epsilon$ is independent of either $\tau$ or $y=a(\tau)$.

When $h\longrightarrow 0$, so does $\epsilon$. Let $\delta=a(\tau+h)-a(\tau)$. Considering $a(t)$ on $[\tau,\tau+h]$, we find
$$\frac{\delta}{h}\longrightarrow R(y).$$

Considering $a(t)$ on $[\tau+h,\tau+h+\epsilon]$, we have
$$\frac{\delta}{\epsilon}\longrightarrow D(y);$$

thus,
$$\frac{\epsilon}{h}\longrightarrow \frac{R(y)}{D(y)}.$$

Because $h$ and $\epsilon$ are independent of $y$ if the cognitive load assumption is satisfied, we must have $R=\kappa D$, $\kappa=\lim(\epsilon / h)\in \mathbb{R}_{+}^{*}$. $\square$
\end{proof}

Thus, $D$ and $R$ are proportional under the cognitive load assumption expressed in the TBRS model. This is a mathematical consequence of a main TBRS hypothesis that has never been expressed before.

\subsection{Exponential decay}

For the sake of simplicity, we suppose that whenever an item is first presented, its activation equals a constant baseline value $\beta$ during presentation (this does not impair the generality of TBRS2, providing that the presentation duration is constant across memoranda).

Henceforth, we will also assume that the decrease in activation is exponential, which amounts to saying that $D$ is a linear function of $y$. From Theorem \ref{theo:cl}, we know that $R$ is then also a linear function of $y$ (i.e., refreshment is exponential). In other words, if attention is not focused on item $x_{i}$, then $a_{i}(t)\propto \exp (- d t)$, where $d$ is the (absolute) \emph{decay rate}. If attention is focused on $x_{i}$, then $a_{i}(t)\propto \exp(r t)$, where $r$ is the \emph{refreshment rate}.
For exponential decay, an easy way to study the probability of a correct recall is to consider log-odds instead of activation levels (odds). Indeed, if activation decay is exponential, then log-odds evolution is linear, with slope $r$ and $-d$; hence, the following theorem (the proof is immediate):

\begin{theorem}
\label{theo:formula}
Suppose that at time $t_{0}$, an item $x_{1}$ has activation $a_{1}(t_{0})$. Let $\varphi_{1}$ be its focus function. If the item is never presented during period $[t_{0},t]$, then
$$\log (a_{1}(t))=\log(a_{1}(t_{0}))-d (t-t_{0})+(d+r)\int_{t_{0}}^{t}\varphi_{1}(u) du .$$
\end{theorem}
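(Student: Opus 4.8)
The plan is to pass to the log-activation $\log a_1$, observe that under the exponential hypotheses its time-derivative is piecewise constant and takes only the two values $r$ and $-d$, and then recover the stated formula by a single integration. The first step is to record the two regimes explicitly. By the exponential-decay assumption, when attention is driven away from $x_1$ the activation obeys $a_1' = -d\,a_1$, so that $(\log a_1)'(u) = -d$; when attention is focused on $x_1$, refreshment is exponential and $a_1' = r\,a_1$, so that $(\log a_1)'(u) = r$.

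The second step is to merge these two cases using the focus function. Because the hypothesis guarantees that $x_1$ is never presented on $[t_0,t]$, the dummy value $\varphi_1 = 2$ never occurs and $\varphi_1(u)\in\{0,1\}$ throughout the interval. This lets me write the log-derivative as the single expression
$$(\log a_1)'(u) = -d + (d+r)\,\varphi_1(u),$$
which I would justify simply by substitution: the right-hand side equals $-d$ when $\varphi_1(u)=0$ and equals $r$ when $\varphi_1(u)=1$, reproducing the two regimes above. Integrating this identity from $t_0$ to $t$, the fundamental theorem of calculus turns the left side into $\log a_1(t)-\log a_1(t_0)$, while the right side separates into $-d(t-t_0) + (d+r)\int_{t_0}^{t}\varphi_1(u)\,du$, which is exactly the claimed equality after transposing $\log a_1(t_0)$.

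The one place where ``immediate'' is slightly optimistic, and the only point I would treat with any care, is the behaviour at the finitely many instants where $\varphi_1$ switches between $0$ and $1$: there $(\log a_1)'$ jumps and is not, strictly speaking, defined. I would dispatch this by noting that $a_1$ is the concatenation of solutions of the two autonomous equations with matched endpoint values at each switch, hence $a_1$ (and so $\log a_1$) is continuous and piecewise linear in $\log$-scale; the derivative identity then holds off a finite, and therefore measure-zero, set, which is all the integral needs. Equivalently, one can avoid measure-theoretic language entirely by summing the linear increments of $\log a_1$ over the subintervals on which $\varphi_1$ is constant and telescoping. Either route closes the argument with no genuine obstacle.
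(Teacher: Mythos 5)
Your proof is correct and is essentially the paper's own argument made explicit: the paper declares the proof ``immediate'' precisely on the observation that, under exponential decay and refreshment, $\log a_{1}$ is piecewise linear with slopes $-d$ and $r$, which is exactly the identity $(\log a_{1})'(u)=-d+(d+r)\varphi_{1}(u)$ that you integrate (using that $\varphi_{1}\in\{0,1\}$ since the item is never presented on $[t_{0},t]$). Your added care at the finitely many switching instants only spells out what the paper leaves implicit and introduces no new idea.
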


Figure \ref{fig:ex1} displays two simple examples of activation dynamics. The plots were built using the \emph{tbrs} \texttt{R}-function\footnote{Available at \texttt{https://github.com/ngauvrit/tbrs.git}} \cite{team2014r}. Alternatively, one can use our user-friendly online \texttt{Shiny} application\footnote{\texttt{https://mathematicalpsychology.shinyapps.io/tbrs}}.

\begin{figure}[htbp]
\begin{center}
\includegraphics[width=\textwidth]{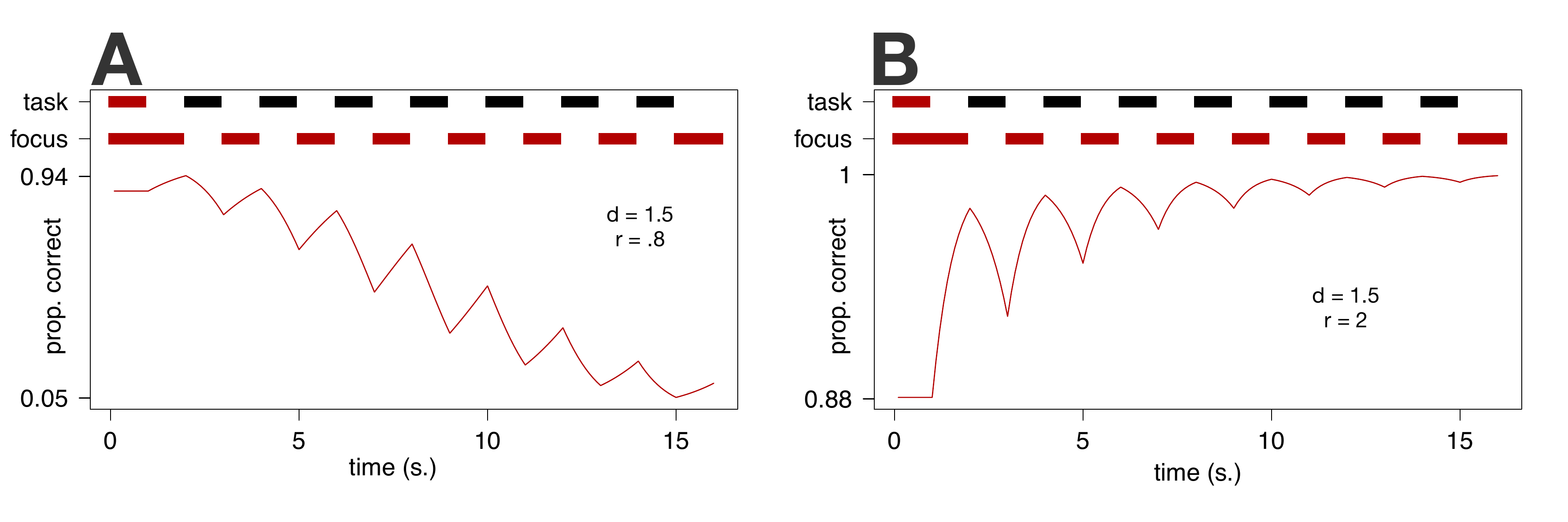}
\caption{Examples of activation dynamics predicted by the model, with a focus on attention switching on and off of the memory item every second. During the first second, the item is being presented and the log-activation is set to 2. Two different sets of conditions $(d,r)$ are presented. The memory trace fades away when $d<r$ (subplot A), but not when $d>r$ (subplot B).}
\label{fig:ex1}
\end{center}
\end{figure}

\section{Task function and task cognitive load}

We have so far considered the case of a single memorandum, but the TBRS was designed to predict item recall in more complex tasks, in which several items $(x_{1},\ldots , x_{n})$ are to be remembered.

\begin{definition}
Define a \emph{task function} as a function $T$ of time, with $T(t)=-1$ if the concurrent task is being performed at $t$, $T(t)=0$ if no concurrent task is to be performed at $t$ with no item presented, and $T(t)=k$, where $k\in \{1,\ldots,n\}$, if item $x_{k}$ is being presented at time $t$.
\end{definition}

\begin{definition}
For such a task, the \emph{cognitive load}
is the proportion of time exclusively devoted to the concurrent task on a given interval:

$$CL(t_{0},t_{1})=\frac{\mu \{ [t_{0},t_{1}]\cap T^{-1}(-1) \} }{t_{1}-t_{0}},$$

\noindent where $\mu$ is the usual Borel measure.
\end{definition}

\subsection{An invariant}

Consider a task function $T$ and a time interval $[t_{0},t_{1}]$ on which no item is presented. Part of this time [$CL(t_{0},t_{1})$] is dedicated to the concurrent task, but the rest is devoted to refreshment. Because refreshment may be distributed among the different items in various specific time courses, the activation at time $t_{1}$ might vary. Consider, for instance, the case of two items $x$ and $y$. If refreshment is dedicated mainly to $x$ during spare time, then we expect a high probability of recall for $x$ and a low one for $y$; however, the reverse case is to be expected if refreshment is dedicated mainly to $y$.
Thus, how ``spare time'' is distributed among the items is an important question for making quantitative predictions (see section \ref{sec:variants}). However, because every spare time is dedicated to a single item, some function of item activations is invariant, as shown by the next theorem.

\begin{theorem}
\label{theo:invariant}
Suppose that no new item is presented on an interval $[t_{0},t_{1}]$, and let $n$ be the number of to-be-remembered items at time $t_{0}$. Then,
$$\prod_{i=1}^{n}a_{i}(t_{1})$$ does not depend on how the refreshment time is distributed on $[t_{0},t_{1}]$. It depends only on $\prod a_{i}(t_{0})$ and $CL(t_{0},t_{1})$.
\end{theorem}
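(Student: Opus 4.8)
The plan is to work entirely with log-activations, applying Theorem~\ref{theo:formula} to each item separately and then summing. Since no new item is presented on $[t_0,t_1]$, the hypothesis of Theorem~\ref{theo:formula} holds for every $x_i$, so for each $i$ we obtain
$$\log(a_i(t_1)) = \log(a_i(t_0)) - d(t_1-t_0) + (d+r)\int_{t_0}^{t_1}\varphi_i(u)\,du.$$
Summing over $i=1,\ldots,n$ turns the product $\prod a_i$ into a sum of logarithms:
$$\sum_{i=1}^{n}\log(a_i(t_1)) = \sum_{i=1}^{n}\log(a_i(t_0)) - n\,d\,(t_1-t_0) + (d+r)\sum_{i=1}^{n}\int_{t_0}^{t_1}\varphi_i(u)\,du.$$

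The crux is the final sum. On $[t_0,t_1]$ no item is being displayed (all $n$ items were presented before $t_0$, and no new item appears), so each $\varphi_i$ takes only the values $0$ and $1$, and $\int_{t_0}^{t_1}\varphi_i(u)\,du$ is exactly the total time item $x_i$ is refreshed during the interval. The key structural fact---which I expect to be the heart of the proof---is that every instant of spare time (time not devoted to the concurrent task) is dedicated to exactly one item. Thus the refreshment periods of the individual items partition the spare time, and $\sum_{i=1}^{n}\int_{t_0}^{t_1}\varphi_i(u)\,du$ equals the total spare time regardless of how that time is allocated among the items. By the definition of cognitive load, the concurrent task occupies a measure $CL(t_0,t_1)\,(t_1-t_0)$ of the interval, so the total spare time is $(1-CL(t_0,t_1))(t_1-t_0)$.

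Substituting this back and exponentiating gives
$$\prod_{i=1}^{n}a_i(t_1) = \left(\prod_{i=1}^{n}a_i(t_0)\right)\exp\!\Bigl(-n\,d\,(t_1-t_0)+(d+r)(1-CL(t_0,t_1))(t_1-t_0)\Bigr),$$
in which the exponential factor depends only on the fixed quantities $n,d,r,t_0,t_1$ and on $CL(t_0,t_1)$, never on the refreshment schedule. This simultaneously shows that the product is invariant under any redistribution of refreshment time and that it enters the initial state only through $\prod a_i(t_0)$, which is the claim.

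The one point to verify with care is the partition claim. The individual integrals $\int_{t_0}^{t_1}\varphi_i(u)\,du$ can vary enormously with the refreshment strategy; it is only their \emph{sum} that is pinned down. I would therefore confirm that the TBRS bottleneck assumption guarantees that at most one $\varphi_i$ equals $1$ at each instant, so that summing the integrals recovers the total spare time with neither double counting nor omission.
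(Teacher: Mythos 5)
Your proof is correct and is essentially the paper's own argument: the paper likewise sums the log-activations into $S(t)=\sum_i \log(a_i(t))$ and uses the bottleneck fact that exactly one item is refreshed at each spare instant, obtaining $S(t_1)-S(t_0)=-nd(t_1-t_0)CL(t_0,t_1)+(r-(n-1)d)(t_1-t_0)(1-CL(t_0,t_1))$, which is algebraically identical to your expression $-nd(t_1-t_0)+(d+r)(1-CL(t_0,t_1))(t_1-t_0)$. The only cosmetic difference is that you route through Theorem~\ref{theo:formula} and the partition of the integrals $\int_{t_0}^{t_1}\varphi_i$, while the paper argues directly from the piecewise-linear slopes of $S(t)$; these are the same computation.
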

\begin{proof}
Consider $\log (a_{i})$. On any interval dedicated to a dual task, $a_{i}$ is decreasing, with slope $-d$, for all $i$. Thus, the sum
$$S(t)=\sum_{i=1}^{n}\log(a_{i}(t))$$
is a linear function with slope $-nd$.
On any interval on which attention is focused on an item, $S(t)$ is also linear, with slope $r-(n-1)d$.
Therefore, we have
$$S(t_{1})-S(t_{0})=-nd(t_{1}-t_{0})CL(t_{0},t_{1})+(r-(n-1)d)(t_{1}-t_{0})(1-CL(t_{0},t_{1})).$$
Considering the exponential completes the proof. $\square$
\end{proof}

\subsection{Simple span}

The parameters $r$ and $d$ are directly related to the simple span (memory capacity), defined as the maximum number of items one can maintain in memory when no concurrent task is involved. More precisely, the theoretical simple span $k$ can be computed using a simple formula, as shown by Theorem \ref{theo:span}. Using this formula, we can estimate a participant's simple span from any data gathered through, e.g., a complex span task, and thus again test the TBRS assumptions.

\begin{theorem}
\label{theo:span}
Let $k$ be the simple span (memory capacity) corresponding to a set of parameters. We have
$$k=\left\lfloor 1+\frac{r}{d} \right\rfloor .$$
\end{theorem}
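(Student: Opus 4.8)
The plan is to read "maintaining $n$ items in memory" as the existence of a refreshment schedule under which none of the $n$ activations fades away (equivalently, no $\log(a_i)$ diverges to $-\infty$) over the maintenance phase, and then to determine exactly which values of $n$ admit such a schedule. In a simple span task there is no concurrent task, so on any maintenance interval $[t_0,t_1]$ (taken after all items have been presented) we have $CL(t_0,t_1)=0$. First I would invoke Theorem \ref{theo:invariant}: writing $S(t)=\sum_{i=1}^{n}\log(a_i(t))$, that theorem gives $S(t_1)-S(t_0)=(r-(n-1)d)(t_1-t_0)$, independently of how the refreshment time is distributed among the items. Thus the sum of log-activations is affine with slope $r-(n-1)d$, and crucially this slope is the \emph{same} for every scheduling strategy.

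The impossibility direction is then immediate. If $r-(n-1)d<0$, then $S(t)\to-\infty$, so at least one summand $\log(a_i(t))\to-\infty$ and the corresponding item necessarily fades, no matter how refreshment is scheduled. Hence keeping all $n$ items alive forces $r-(n-1)d\ge 0$, that is, $n\le 1+r/d$.

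For the achievability direction I would exhibit a concrete schedule whenever $r-(n-1)d\ge 0$, the natural candidate being a fair round-robin: over a period $P$ each item is refreshed for a fraction $1/n$ of the time and decays during the remaining fraction $(n-1)/n$. By Theorem \ref{theo:formula}, over one full period each item's log-activation changes by $P\,[\,r-(n-1)d\,]/n\ge 0$; hence each $\log(a_i)$ is nondecreasing from one period to the next and bounded below by its within-period minimum, so no item fades. This shows every integer $n\le 1+r/d$ is maintainable, and combining with the previous bound the largest admissible integer is $k=\left\lfloor 1+\frac{r}{d}\right\rfloor$.

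The step I expect to be the main obstacle—and the one worth spelling out—is the boundary case $r-(n-1)d=0$ (i.e.\ when $1+r/d$ is itself an integer). There the invariant $S$ is exactly constant, so the global slope argument alone cannot decide whether the $n$ items survive; it is the round-robin construction that settles it, showing that under balanced scheduling each item merely oscillates periodically and stays bounded away from zero, so this borderline $n$ is genuinely maintained. This is precisely what makes the floor function, rather than a strict inequality, the correct expression. A secondary point to make explicit is the inference from ``$S(t)\to-\infty$'' to ``some individual item fades,'' which holds because a finite sum tending to $-\infty$ forces at least one of its summands to do the same.
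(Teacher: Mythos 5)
Your proof rests on the same key lemma as the paper's: the schedule-invariant sum $S(t)=\sum_{i=1}^{n}\log(a_i(t))$, affine with slope $r-(n-1)d$ once $CL=0$, so your impossibility half matches the published argument almost verbatim. Where you genuinely go beyond the paper is the achievability half. The paper's proof stops at the dichotomy ``$S\to+\infty$ if $n<1+r/d$, $S\to-\infty$ if $n>1+r/d$'' and then simply asserts the floor formula; it never exhibits a schedule keeping every \emph{individual} item alive (divergence of the product does not by itself preclude one factor from collapsing while others grow), and it is silent on the borderline case where $1+r/d$ is an integer and $r-(n-1)d=0$, which its strict-inequality dichotomy does not cover. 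Your round-robin construction closes both gaps at once: each $\log(a_i)$ gains $(P/n)\bigl(r-(n-1)d\bigr)\ge 0$ per period and is uniformly bounded below within a period, so every $n\le 1+r/d$, including the boundary case, is maintainable --- which is exactly what justifies the floor rather than a strict inequality. In this respect your proof is strictly more complete than the paper's.

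One caveat on the step you flag as secondary: the claim ``a finite sum tending to $-\infty$ forces at least one summand to do the same'' is false for general functions, and the counterexamples are realizable by these very dynamics --- a schedule can let items take turns plunging and recovering, so that each $\log(a_i)$ has bounded $\limsup$ (hence no single item ``diverges to $-\infty$'') while $S(t)\to-\infty$. So with your literal definition of maintenance (``no $\log(a_i)$ diverges to $-\infty$'') the impossibility direction would leak. The equally easy repair is the pointwise bound $\min_i \log(a_i(t))\le S(t)/n\to-\infty$: at every sufficiently late time \emph{some} item is arbitrarily close to zero activation, so the $n$ items are not maintained under any sensible reading (all activations bounded away from zero at recall). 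It is worth noting that the paper's own proof makes the same informal leap (``no activation tends toward 0, and thus $\prod a_i(t)$ does not tend toward 0''), so this repair tightens both arguments rather than exposing a defect unique to yours.
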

\begin{proof}
Consider a simple span task, in which $n$ items are presented sequentially beginning at $t=0$, and involving no dual task.
Let $t_{0}$ be a time at which the $n$ items have been presented. Then, $CL(t_{0},t)$ remains null, and thus
$$S(t)=S(t_{0})+(r-(n-1)d)(t-t_{0}),$$
where $$S(t)=\sum_{i=1}^{n}\log(a_{i}(t)).$$
Thus, $S(t)$ tends to $\pm \infty$, depending on the sign of $r-(n-1)d$.

If $r-(n-1)d>0$, or $n<1+\frac{r}{d}$, then $S(t)$ tends toward $\infty$, and so does $$\prod a_{i}(t)=\exp(S(t)).$$
If $n>1+\frac{r}{d}$, then $S$ tends toward $-\infty$, and $$\prod a_{i}(t)\longrightarrow 0.$$

We have thus proven that

$$\prod_{i=1}^{n}a_{i}(t)\xrightarrow[t\rightarrow \infty]{} 0$$
if $n>1+\frac{r}{d}$, and
$$\prod_{i=1}^{n}a_{i}(t)\xrightarrow[t\rightarrow \infty]{} \infty$$
if $n<1+\frac{r}{d}$.

If $n$ items can be maintained in memory, then no activation tends toward 0, and thus $\prod_{i=1}^{n}a_{i}(t)$ does not tend toward 0, which means that $n\leq 1+r/d$. Thus, $k=\lfloor 1+\frac{r}{d}\rfloor$ is the maximum number of maintainable items, i.e., the simple span. $\square$
\end{proof}

\subsection{Summary}

We implemented the assumptions expressed in the TBRS model in a mathematical framework based on the following axioms:

\begin{enumerate}
\item Attention is always focused on a single to-be-remembered item or on a concurrent task.
\item Activation of item $x_{i}$ increases when attention is focused toward $x_{i}$ and decreases otherwise. The rate of decay/increase is a function of the current activation.
\item Activation of item $x_{i}$ at time $t$ is a function of the cognitive load on $[t_{0},t]$, providing that item $x_{i}$ is presented before time $t_{0}$.
\item Activation decreases exponentially.
\end{enumerate}

From these axioms borrowed from the TBRS (except for the exponential decay), we derived the following mathematical consequences:

\begin{enumerate}
\item The refreshment rate is exponential.
\item Given activations $(a_{1}(t_{0}),\ldots , a_{n}(t_{0}))$ at time $t_{0}$, and providing that no new item is presented after $t_{0}$, the product of the activations at time $t>t_{0}$ does not depend on how the refreshment time is distributed across the memory items.
\item The refreshment rate $r$, decay rate $d$, and simple span $k$---which is the maximal number of items that can be maintained in memory in a simple span task---are linked by the straightforward relation $$k=\left \lfloor{1+\frac{r}{d}} \right \rfloor .$$
\end{enumerate}

\section{Variants of the TBRS model}
\label{sec:variants}

A given task defined by a function $T$ leads to a time-dependent focus vector $(\varphi_{1}(t),\ldots ,\varphi_{n}(t))$.
The TBRS assumptions require that
\begin{itemize}
\item $\varphi_{i}(t)$ is undefined if item $i$ has not yet been presented at $t$,
\item $\varphi_{i}(t)=0$ (or is undefined) if $T(t)=-1$,
\item $\varphi_{i}(t)=2$ (and $\varphi_{j\neq i}=0$ or is undefined) if $T(t)=i$, and
\item $(\varphi_{1}(t),\ldots ,\varphi_{n}(t))$ has exactly one component equal to 1, and all the others are equal to 0 or undefined, if $T(t)=0$.
\end{itemize}

The last point expresses that whenever no item is being presented, and when no concurrent task is required, attention is focused on one of the to-be-remembered items. However, it does not predict how spare time is distributed among items. We will now define six variants of the TBRS2 model based on how the spare time period is organized to deal with the memory items.

\subsection{Steady vs.~threshold}
A first distinction can be made regarding how long an item is refreshed when attention is focused on it. A variant of the TBRS2 that we will call \emph{steady} posits that the refreshment duration is a fixed value (for instance, $d=0.2$ s).
Another variant (the \emph{threshold} model) posits that whenever attentional focus switches to a new item, it does so until activation of this particular item reaches a threshold $w$ (unless attentional focus is directed away by a concurrent task).

Figure \ref{fig:type} shows examples of predicted activation dynamics in the case of a simple span task for the steady and threshold variants. In these examples, the steady variant predicts more variability in the final probability of recall than the threshold variant if the number of items (here 3) is below the simple span. However, it predicts greater variability if the number of items (here 4) is above the simple span.

\begin{figure}[htbp]
\begin{center}
\includegraphics[width=\textwidth]{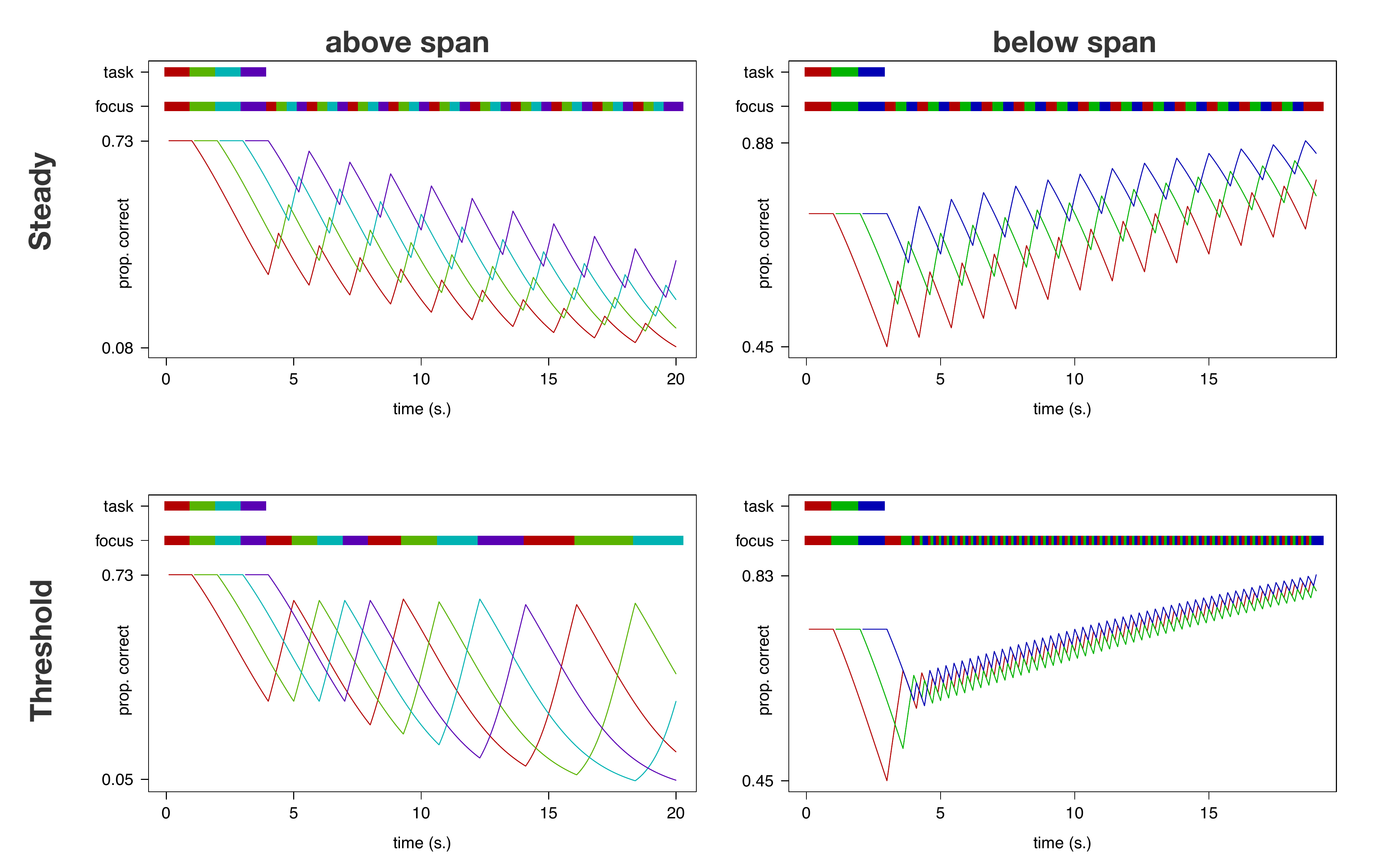}
\caption{Variants of TBRS2 predictions for a simple span task. We set $d=0.6$ and $r=1.4$. In the steady variant, refreshment of an item lasts 0.4 s, whereas in the threshold variant, refreshment stops when activation reaches a cut-off point, or after 0.1 s (i.e., the refreshment lasts 0.1 s if the activation is already above the threshold at the beginning of the refreshment period).}
\label{fig:type}
\end{center}
\end{figure}

\subsection{First, next, or lowest}

The TBRS2 model can also vary as a function of how interruptions due to concurrent tasks are handled. During ``spare time,'' items are being refreshed in a regular order: item 1, item 2, item 3, ..., item $n$, item 1..., but there are different ways to select the to-be-refreshed item after an interruption caused by the concurrent task. We will consider three simple variants. (1) In the \emph{first} variant, the first item $x_{1}$ is always refreshed first after the presentation of a distractor. (2) In the \emph{next} variant, the model keeps track of the last refreshed item and continues with the next one. For instance, if the concurrent task occurs when item 2 is refreshed, then item 3 will be refreshed after the interruption. Finally, (3) the \emph{lowest} variant predicts that the item with the lowest activation is refreshed first. This could correspond to a ``maximin'' strategy in which one tries to maximize the minimal activation.

Depending on how dual task interruptions are spread in time, these variants may lead to different predictions that could hardly be presented by a verbal theory. A few illustrative examples are shown in Figure \ref{fig:startType}. With a particular task (alternation of a dual task and spare time every 2 s), visual inspection reveals different predictions depending on the variant. The \emph{first} version predicts a primary effect, whereby the first item is more likely to be recalled. The \emph{next} version predicts a recency effect, whereby the last item is more likely to be recalled. Finally, the \emph{lowest} version predicts similar decreases in activation among items and no clear-cut order effect.

\begin{figure}[htbp]
\begin{center}
\includegraphics[width=\textwidth]{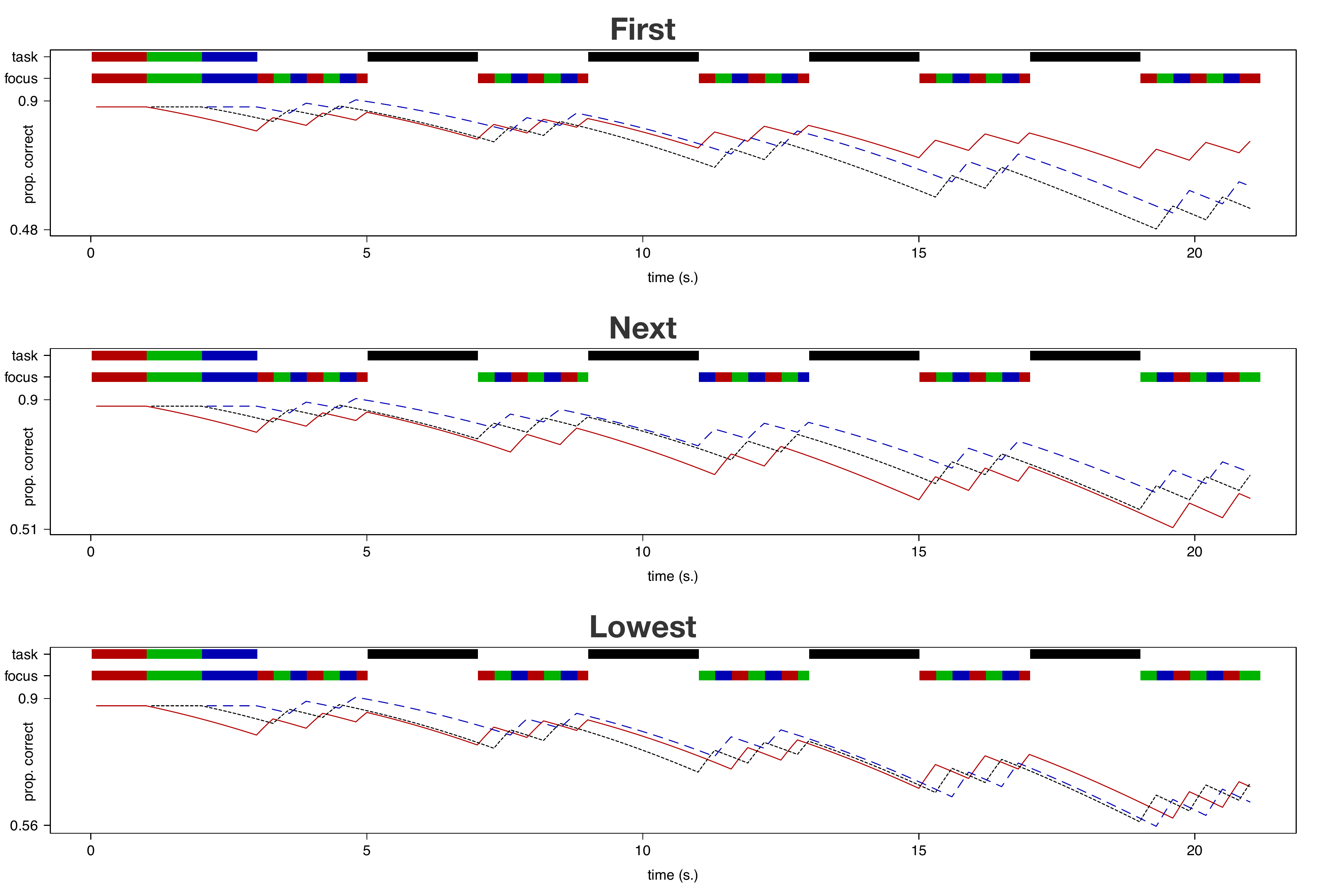}
\caption{Predictions of the steady TBRS with three variants, with alternating distractors and spare time (2 s each). Parameters are set as follows: $d=0.3, r=1$, duration$=0.3$ s, $\beta=\exp(2)$ (log-odds at presentation is set to 2). Here, we simplify a complex span task by putting the memory items at the beginning of the task.}
\label{fig:startType}
\end{center}
\end{figure}

\section{Parameter estimation and model comparison}

The TBRS is underdefined when it comes to two characteristics. First, it does not detail how long each item is refreshed. We suggest two simple models concerning this point: either a constant duration (steady) or a duration granting that activation reaches a cut-off point (threshold).

Second, TBRS does not detail how the to-be-refreshed item is determined after an interruption. We consider three versions of the model: (1) in the ``first'' version, refreshment is reset to the first item, (2) in the ``next'' version, the model keeps track of the previous item and goes to the next one. Finally, (3) in the ``lowest'' version, the less activated item is chosen as a new starting point.
Combining these possibilities, we can build six versions of the TBRS2. In the following, we analyze experimental data using these variants. Note that all six variants have the same number of parameters (4) listed below:

\begin{enumerate}
\item Decay rate $d$, expressed in points of log-odds by second,
\item Refreshment rate $r$, also expressed in points of log-odds by second,
\item Baseline $\beta$, which is the activation of an item when presented, and
\item Duration (of the focus of a particular item) or threshold $w$.
\end{enumerate}

In this last section, we use empirical data to illustrate how the formal framework of TBRS2 can be used to compare models, estimate parameters, and gauge TBRS assumptions.

\subsection{Method}

Thirty-two psychology students aged 18--29 years ($M = 20.83, SD =2.58$) were recruited to take part in the experiment for course credit. Each participant performed a series of complex memory span task trials. In contrast to previous experiments in which the dual task is regular, the present dual task was semi-randomly organized along the timeline to test the TBRS using the most diversified patterns of distraction.
 
\subsubsection{Procedure}
The stimuli were capital letters (B, F, H, J, K, L, P, Q, R, S, V, X) chosen for having few phonological similarities in French. The stimuli were displayed visually on a computer screen.  Each list was composed of a maximum of six letters that were drawn without replacement.
 
After each letter, a concurrent task required the participant  to press the space bar whenever a 1, 2, or 3 stimulus digit was displayed. The stimulus digits were drawn randomly from the 1--9 set. The concurrent task occurred during a free time duration that was randomly drawn between 1 and 5 s (``free time'' indicates only that the participant was not presented with the to-be-recalled stimulus letters, but does not imply that they were free enough to refresh the letters, as explained below). This free time was divided into 1000 ms time slots during which attention capture could occur. For each slot, there was a chance for a distractor to be presented.
 
 Each experimental session lasted approximately half an hour and included 60 separate stimulus lists. The 60 lists were built as follows: the length varied from two to six letters, and the  difficulty varied from easy to difficult. There were six different Difficulty conditions for the concurrent task, based on six different probabilities (0, 0.20, 0.40, 0.60, 0.80, 1) that one stimulus digit would be drawn during each slot of the free time period. Once a  probability was set for a list, it was applied to the entire list across the slots. This generated 5 (Length) $\times$ 6 (Difficulty) $= 30$ conditions, which we doubled to give each participant 60 lists to be recalled. For instance, if a 5 s free time duration, divided into five slots of 1000 ms each, was chosen between two letters for a given list, and if the  probability was set to 0.80, the probability that one digit could be displayed in each of the five time slots was 0.80. This could, for instance, generate a 90827 sequence (with the ``0'' symbol indicating a period of 1000 ms without any distractor). The letters and the to-be-captured digits (1, 2, or 3) were always followed by an ``empty'' slot to avoid building sequences that would be too cluttered. An example of a sequence is K01050V02087X000Q980. The participants could enter their response by clicking on a visual keyboard of 3 $\times$ 4 letters. The letters were always associated with the same positions on the keyboard across trials. The letter disappeared after being clicked, so the participants were not able to correct their response. The subjects were instructed to recall the letters in order, if possible. After the participants validated their answer with the space bar, a feedback screen indicated whether the recall was correct (i.e., item memory and order memory both correct). Then, a screen with a GO window waited until the user moved on to the following list by pressing the space bar again. The participants were then presented with the next list, which followed a fixation cross lasting 2 s.
 
The task began after a short warmup including 18 progressive conditions. For the warmup only, a rapid green light appeared on the screen under the digit location whenever a digit was correctly captured. Similarly, a rapid red light appeared whenever the space bar was pressed in error (false alarm). After the warmup, the experimenter checked whether both tasks (memory task and concurrent task) were correctly performed during the warmup before running the actual experiment. Omissions, false alarms, and recall performance were scrutinized by the experimenter in order to give the best advice to the participant for the following experiment (for instance, being less impulsive on the concurrent task or being more attentive to the concurrent/memory task).

\subsection{Results}
For this first simulation, we chose to analyze the data without taking the order or success of the concurrent task into account; a letter item was considered correctly recalled whenever it appeared in the response. For each participant and each variant of the TBRS2, a nonlinear minimization of $-LL$ (where $LL$ is the log-likelihood of the observed data) based on Newton's algorithm was performed. The constraints imposed on $d$ and $r$ were that $r>0$ and $2<r/d<11$. The results concerning the log-likelihood are displayed in Table \ref{tab:mll}.

To get a baseline, we computed the log-likelihood of a dummy model assigning equal and constant probability of recall to every item. For instance, if a participant recalled 95\% of the 240 memoranda on the whole, the dummy model predicted a probability of recall equal to 0.95 for every item in every trial. Note that the dummy model, although simplistic, gives an almost perfect fit for a proportion of correct recall nearing 1.

As shown in Table \ref{tab:mll}, the proportion of correct recall is high, ranging from 0.79 to 0.99. The TBRS2 clearly fits the data better than the dummy model in terms of $LL$. However, the TBRS2 has four free parameters, whereas the dummy model has only one free parameter. We thus used the Akaike information criterion (AIC) to compare the models. The results are given in Figure \ref{fig:mll}, in which the area below the dotted horizontal line corresponds to cases for which the TBRS2 yields poorer fit than the dummy model in terms of the AIC.

From the estimated parameters $d$ and $r$, we could derive an estimate of the simple span using Theorem \ref{theo:span}. As a result of constraints imposed on $r/d$, this estimated simple span was bound to lie between 3 and 12. The mean estimated simple span was 7.09 ($SD=1.53$), with a median equal to 8.

\begin{table}[ht]
\caption{Participants' maximum log-likelihood table, sorted by increasing proportion of correct recall (column 1). The second column indicates the absolute value of the maximum log-likelihood of the dummy model. Columns 3 to 8 indicate the maximum log-likelihood difference between the dummy model and a variant of TBRS2. Positive values indicates that a variant of TBRS2 fits the data better than the dummy model. SF: Steady First, SN: Steady Next, SL: Steady Lowest, TF: Threshold First, TN: Threshold Next, TL: Threshold Lowest.}
\centering
\begin{tabular}{lrrrrrrr}
  \hline
Correct & Dummy & SF & SN & SL & TF & TN & TL \\ 
  \hline
0.79 & 124.14 & 13.90 & 14.36 & 13.81 & 11.82 & 11.88 & 11.46 \\ 
  0.80 & 118.70 & 11.50 & 11.35 & 13.82 & 12.53 & 12.43 & 16.57 \\ 
  0.81 & 115.82 & 9.41 & 9.41 & 10.60 & 11.25 & 12.61 & 13.15 \\ 
  0.82 & 117.65 & 2.12 & 2.11 & 2.21 & 0.04 & 2.11 & 3.26 \\ 
  0.85 & 101.45 & 10.58 & 9.46 & 9.27 & 13.63 & 15.56 & 13.33 \\ 
  0.88 & 90.42 & 15.03 & 14.14 & 13.94 & 13.61 & 14.60 & 11.95 \\ 
  0.88 & 86.46 & 8.98 & 9.29 & 7.98 & 9.36 & 12.16 & 9.58 \\ 
  0.91 & 71.21 & 8.87 & 9.14 & 8.74 & 8.40 & 8.57 & 7.75 \\ 
  0.92 & 66.42 & 3.84 & 3.94 & 2.66 & 3.02 & 3.11 & 3.11 \\ 
  0.93 & 61.39 & 7.70 & 7.33 & 6.16 & 5.86 & 6.52 & 3.97 \\ 
  0.93 & 58.78 & 1.77 & 1.40 & 1.47 & 1.54 & 1.49 & 1.57 \\ 
  0.94 & 56.11 & 2.14 & 2.18 & 1.71 & 2.21 & 3.24 & 2.88 \\ 
  0.94 & 56.11 & 5.86 & 7.14 & 6.17 & 5.98 & 6.41 & 6.08 \\ 
  0.94 & 56.24 & 0.46 & 0.47 & 0.67 & 0.38 & 0.38 & 1.20 \\ 
  0.95 & 50.55 & 1.16 & 1.14 & 0.95 & 1.42 & 1.44 & 0.87 \\ 
  0.95 & 50.55 & 3.65 & 3.48 & 2.22 & 3.30 & 3.72 & 2.92 \\ 
  0.95 & 44.65 & 12.57 & 12.70 & 14.41 & 13.69 & 13.82 & 14.92 \\ 
  0.96 & 41.57 & 4.79 & 3.83 & 2.59 & 5.28 & 5.18 & 3.26 \\ 
  0.96 & 38.38 & 1.43 & 1.44 & 2.25 & 1.75 & 1.91 & 2.55 \\ 
  0.96 & 38.38 & 0.78 & 0.68 & 0.77 & 1.48 & 1.19 & 0.15 \\ 
  0.96 & 38.38 & 0.67 & 0.70 & 0.58 & 1.21 & 1.75 & 0.04 \\ 
  0.97 & 31.64 & 4.54 & 4.61 & 4.27 & 4.78 & 4.37 & 5.43 \\ 
  0.97 & 28.06 & 2.29 & 2.38 & 1.90 & 2.09 & 1.89 & 2.21 \\ 
  0.98 & 24.30 & 1.52 & 1.22 & 0.99 & 0.93 & 1.36 & 0.00 \\ 
  0.98 & 20.34 & 3.23 & 3.88 & 3.05 & 4.61 & 2.58 & 2.34 \\ 
  0.98 & 20.34 & 0.37 & 0.40 & 0.60 & 0.31 & 0.40 & 0.84 \\ 
  0.98 & 20.34 & 3.45 & 3.00 & 3.09 & 3.62 & 3.58 & 2.48 \\ 
  0.99 & 16.13 & 0.69 & 0.70 & 0.57 & 0.36 & 0.37 & 0.42 \\ 
  0.99 & 16.13 & 0.35 & 0.36 & 0.32 & 0.56 & 0.60 & 0.13 \\ 
  0.99 & 11.57 & 0.67 & 0.68 & 0.78 & 0.51 & 0.51 & 0.74 \\ 
  0.99 & 11.57 & 0.23 & 0.23 & 0.19 & 0.20 & 0.20 & 0.23 \\ 
  0.99 & 11.57 & 0.39 & 0.48 & 0.38 & 0.20 & 0.35 & 0.36 \\ 
   \hline
\end{tabular}
\label{tab:mll}
\end{table}

\begin{figure}[htbp]
\begin{center}
\includegraphics[width=\textwidth]{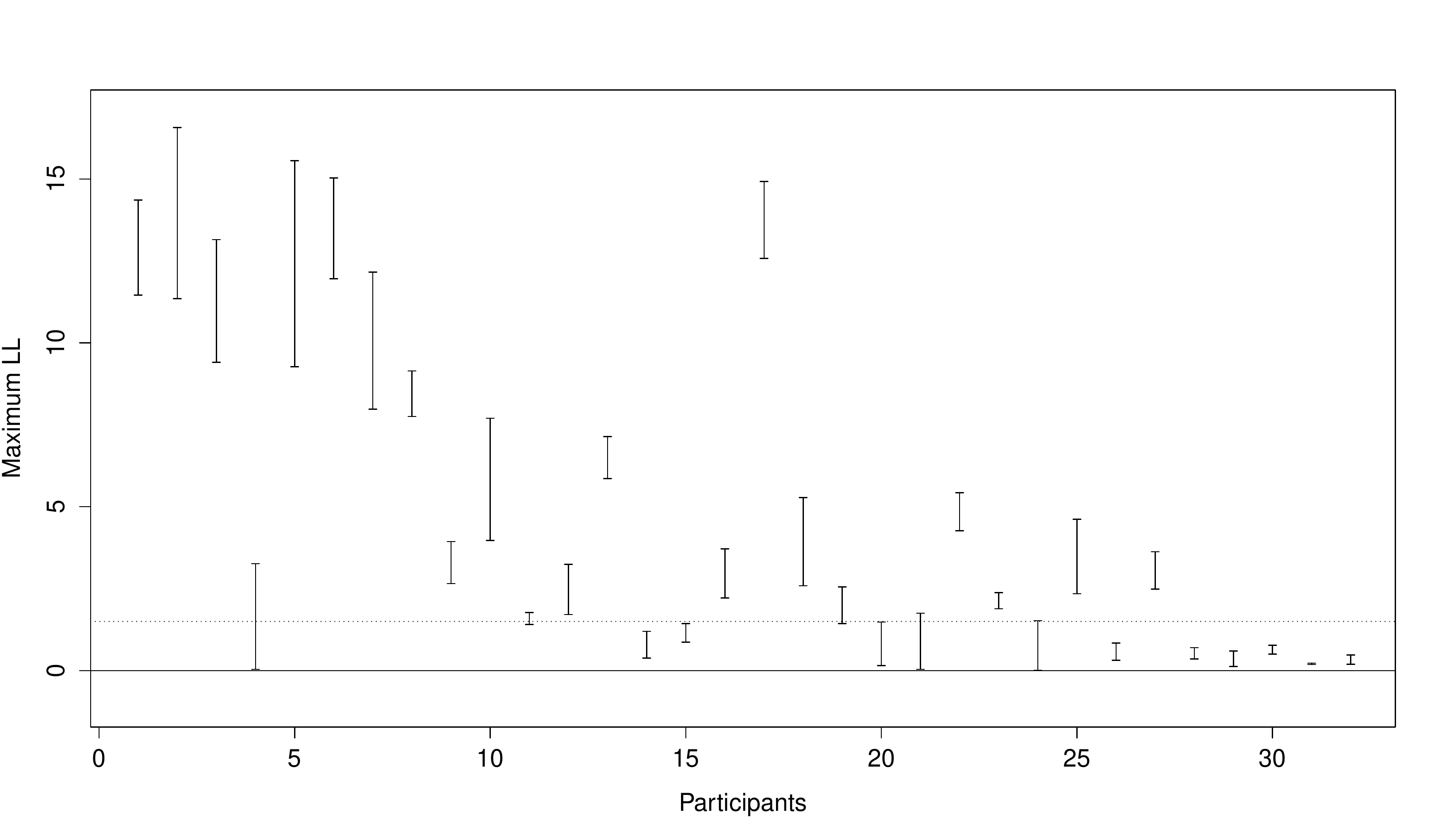}
\caption{Maximum log-likelihood (LL) by participant, sorted by increasing proportion of correct recall. The $y$ axis displays the maximum log-likelihood difference between the TBRS2 and dummy model. Each segment runs from the minimum to the maximum value across the variants. Positive values indicate that TBRS2 fits the data better than the dummy model. The bottom area between the solid and dotted lines corresponds to cases in which, although the LL is greater for the TBRS2, the Akaike information criterion is lower.}
\label{fig:mll}
\end{center}
\end{figure}

\subsection{Discussion}

Comparing the TBRS2 with a dummy model yields apparently mixed results. Because the observed proportion of correct recall often approaches 1 (with 9 participants above 98\% of correct recall), one should take into account the observed proportion of correct recall. As illustrated in Figure \ref{fig:mll}, the TBRS2 variants gave a better fit (AIC) than a dummy model whenever a participant correctly recalled less than 94\%. Keeping that in mind, the data are clearly in favor of the TBRS2, as compared with the dummy model.

Comparisons of non-embedded models based on the AIC are always subjective, but some authors have suggested that a difference of 4 to 7 (i.e., a difference of 2 to 3.5 in terms of $LL$ for the TBRS2) roughly corresponds to a 95\% confidence interval \cite{burnham2002model}. Using this rule of thumb, we found no strong evidence in favor of any variant of the TBRS2 against another. There was no overall best variant across the participants, partly because several variants fitted different participants without apparent regularity. We defer to future research the task of comparing variants of the TBRS2, as it is not the main goal of the current study.

Using Theorem \ref{theo:span}, we derived participant-wise estimates of the simple span and found that, although we imposed only loose constraints on this simple span, the resulting estimates are in line with previous research suggesting a simple span of 5 to 9. In fact, only two participants did not fall within this range. This is an argument in favor of the TBRS2, and by extension an argument in favor of the TBRS theory.

\section{Conclusion}

We built the first detailed mathematical transcription of the TBRS assumptions that adds no characteristic not already addressed by the original description \cite{barrouillet2007time}, making as few decisions as possible and using as few parameters as possible. In comparison to the only other computational implementation (TBRS*) \cite{oberauer2011tbrs}, our TBRS2 model does not account for features such as order encoding: Although it does predict order effects on correct recall, it does not describe how the order of the items is encoded. On the one hand, the TBRS2 is thus less rich than the TBRS*. On the other hand, the TBRS2 is simpler and more transparent. Thanks to this transparency, we were able to prove several theorems mathematically following from the TBRS assumptions. For instance, the decay and refreshment functions are tightly related under the cognitive load assumption. Another striking theoretical result is that the simple span can be computed from the decay and refreshment rates.
These results can now be used to test the TBRS theory at a degree of precision probably never reached before. In an illustrative experiment, we estimated the TBRS2 free parameters $d$ and $r$ and derived a simple span estimate. We found plausible results in favor of the TBRS2 and therefore in favor of the TBRS theory.

\section*{References}

\bibliography{TBRSbib}
\bibliographystyle{plain}

\end{document}